% This is samplepaper.tex, a sample chapter demonstrating the
% LLNCS macro package for Springer Computer Science proceedings;
% Version 2.21 of 2022/01/12
%
\documentclass[runningheads]{llncs}
\usepackage[T1]{fontenc}
\usepackage{graphicx}
\usepackage{hyperref}
\usepackage{color}

\usepackage{verbatim}
\usepackage{multicol}
\usepackage{tikz}
\usepackage{subfigure}
\usepackage{amsmath}
\usepackage{amsfonts}
\usepackage{amssymb}
\usepackage{mathrsfs}

\usetikzlibrary{arrows,backgrounds,calc,fit,decorations.pathreplacing,decorations.markings,shapes.geometric, arrows}
\tikzset{every fit/.append style=text badly centered}
\definecolor{codegreen}{rgb}{0,0.6,0}
\definecolor{codegray}{rgb}{0.5,0.5,0.5}
\definecolor{codepurple}{rgb}{0.58,0,0.82}
\definecolor{backcolour}{rgb}{0.95,0.95,0.92}

\tikzstyle{internal} = [draw, fill, shape=circle]
\tikzstyle{external} = [shape=circle]
\tikzstyle{square}   = [draw, fill, rectangle]
\tikzstyle{triangle} = [draw, fill, regular polygon, regular polygon sides=3, inner sep=3pt]
\tikzstyle{pentagon} = [draw, fill, regular polygon, regular polygon sides=5, inner sep=2pt, minimum size=14pt]

\newcommand{\Rmnum}[1]{\expandafter\@slowromancap\romannumeral #1@}
\makeatother

\newcommand{\Holant}{\operatorname{Holant}}

\newcommand{\arity}{\operatorname{arity}}

\newcommand{\mf}[1]{\ensuremath{\mathcal{#1}}}

\newcommand{\la}{\langle}
\newcommand{\ra}{\rangle}

\newcommand{\tent}[2]{\ensuremath{#1 ^ {\otimes #2}}}

\newcommand{\wyy}[3]{\left[ \begin{smallmatrix} #1 \\ #2 \\ #3 \end{smallmatrix}\right]}
\newcommand{\xyy}[4]{\left[ \begin{smallmatrix} #1 \\ #2 \\ #3 \\ #4 \end{smallmatrix}\right]}

\begin{document}
\title{A combinatorial view of Holant problems on higher domains}

\author{Yin Liu}
\authorrunning{Y. Liu}
\institute{Google \\ Mountain View, CA 94043, USA \\
\email{hughliu@google.com}\\
}

% \author{Author Name}
% \authorrunning{A. Name}
% \institute{Institute \\ Address \\
% \email{email}\\
% }

%
\maketitle              % typeset the header of the contribution
\begin{abstract}
On the Boolean domain, there is a class of symmetric signatures called ``Fibonacci gates''
% ~\cite{Fibo} 
for which a beautiful P-time combinatorial algorithm has been designed for the corresponding $\Holant$ problems. 

In this work, I give a combinatorial view for  $\Holant(\mf{F})$ problems on a domain of size 3 where $\mf{F}$ is a set of arity 3 functions with inputs taking values on the domain of size 3 and the functions share some common properties. The combinatorial view can also be extended to the domain of size 4.

Specifically, I extend the definition of ``Fibonacci gates'' to the domain of size 3 and the domain of size 4. Moreover, I give the corresponding combinatorial algorithms.
% (Theorems~\ref{dom3_fibo_theo} and ~\ref{dom4_fibo_theo})

\keywords{Holant problem \and Combinatorial algorithm \and Higher domain.}
\end{abstract}
\section{Introduction and background}
In a lot of fields in 
computer science, machine learning and statistical physics, counting problems play a role. 
Holant problems encompass a broad class of
counting problems \cite{Backens21,BackensG20,CaiGW16,CaiL11,cai2009holant,CaiLX10,GuoHLX11,GuoLV13,MKJYC,valiant2006accidental,valiant2008holographic,Xia11}. 
% For symmetric constraint functions this is also equivalent to
This framework extends
edge-coloring models~\cite{szegedy2007edge,szegedy2010edge} while the latter is when the constraint functions are symmetric.
These problems also extend counting constraint satisfaction problems (a.k.a. CSP). 
% Freedman, Lov\'{a}sz and Schrijver 
It was proved that some prototypical Holant problems including counting perfect matchings, cannot be expressed as vertex-coloring models known as graph homomorphisms~\cite{Freedman-Lovasz-Schrijver-2007,HellN04}. 
The complexity classification program of counting problems 
is to classify the computational complexity
%for as broad as possible classes 
of these problems.
%as 
%into either \#P-hard or P-time computable.

A Holant problem on a domain of size  $D$ is defined on a graph $G=(V, E)$ where $V, E$ represent the set of vertices and edges respectively. 
% Moreover, edges represent
% variables and vertices are constraint functions.
In our context, edges are variables and vertices are constraints.
Given a set of  constraint functions $\mathcal{F}$ defined on $D$,
a \emph{signature grid} $\Omega=(G, \pi)$  assigns to each vertex $v \in V$ a function (a.k.a. signature) $f_{v} \in \mathcal{F}$.
%and a linear order of the incident edges at $v$.
%The Holant problem 
The goal is to compute the following \emph{partition} function 
$$\Holant_\Omega = \sum_{\sigma: E \rightarrow D} \prod_{v \in V}f_v\left(\left.\sigma\right|_{E(v)}\right).$$
$\sigma$ is each edge assignment and the summation contains in total $|E|^D$ many assignments. $E(v)$ are edges incident to vertex $v$. 
The computational problem is denoted by $\Holant (\mf{F})$.
%The aim of a Holant problem 
%is to compute its partition function, which is
%a sum over all different edge assignments of the product over
%all vertices of the  constraint function evaluations.
Specifically, on the Boolean domain, it is over all $\{0,1\}$-edge assignments. On the domain of  size 3, it is over all $\{R,G,B\}$-edge assignments, signifying three colors Red, Green and Blue. And for  size 4, it is over all $\{R,G,B,W\}$-edge assignments.
On the Boolean domain, if every vertex has the 
\textsc{Exact-One}
function (which evaluates to 1 if
exactly one incident edge is 1, and evaluates to  0 otherwise),
then the partition function gives
the number of perfect matchings.
As another example, on domain size $k$, if every vertex has the 
\textsc{All-Distinct}
function,
then the partition function calculates
the number of valid $k$-edge colorings.

A $\Holant^*$ problem is a class of $\Holant$ problems where we assume all the unary signatures are freely available. In other words, $\Holant^*$ considers function sets containing all unary functions. $\Holant^*(\mf{F}) = \Holant(\mf{F}\ \cup\ \mf{U})$ where we denote by $\mf{U}$ the set of all unary functions. 
% Note that these functions can be complex, regardless of the setting of $\mf{F}$. 

A \emph{symmetric} signature is a function that is invariant under
any permutation of its variables.
The value of such a signature  depends only on the numbers of each domain assigned to its input variables (i.e., edges). 
The number of variables
is its arity; unary, binary, ternary
signatures have arities 1, 2, 3 respectively.
%respectively.
We denote a symmetric ternary signature $g$ on a domain of size 3 by a ``triangle'' consisting of 10 numbers:
\begin{center}
\begin{tabular}{c c c c c c c}
     &&& $g_{3,0,0}$ &&&  \\
     && $g_{2,1,0}$ && $g_{2,0,1}$ &&\\
   & $g_{1,2,0}$ && $g_{1,1,1}$ && $g_{1,0,2}$ & \\
   $g_{0,3,0}$ && $g_{0,2,1}$ && $g_{0,1,2}$ && $g_{0,0,3}$ 
\end{tabular}    
\end{center}
where $g_{i,j,k}$ is the value on inputs having $i$ Red, $j$ Green and $k$ Blue.
%over the incident edges.
Similarly, we denote a symmetric ternary signature $g$ on a domain of size 4 by a ``tetrahedron'':
%in Figure~\ref{pyramid}.

\begin{figure}[h!]
    \centering
    \includegraphics[scale=0.7]{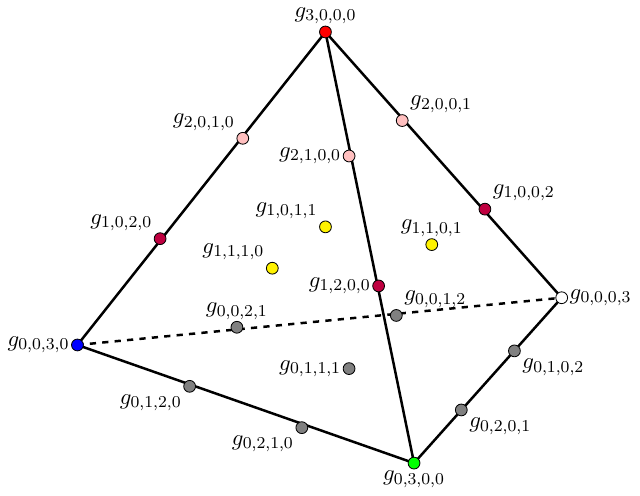}
    %\caption{$g$}
    \label{pyramid}
\end{figure}

For the
classification of counting CSP, researchers have made a lot of progress~\cite{bulatov2006dichotomy}  
\cite{jacm/CaiC17,cai2016nonnegative,dyer2013effective}. Also, for Holant problems, \cite{CaiGW14,CaiLX13,Fu-Yang-Yin} are some of the milestone works. Specifically, in~\cite{Fibo}, ``Fibonacci gates''  is firstly proposed and studied on the Boolean domain. However, studying Holant problems on higher domains is particularly challenging. There are few existing works on higher domains. Among them, one
%Basically, when adapting to domain size 3, the Holant value becomes a sum over all $\{R,G,B\}$-edge assignments of the product over all vertices of the constraints function evaluations.
%On domain size 3, the Holant problem computes the following partition function 
%$$\Holant (\mf{F})=\sum_{\sigma: E \rightarrow\{R,G,B\}} \prod_{v \in V}f_v\left(\left.\sigma\right|_{E(v)}\right),$$
%where $\mf{F}$ is a signatrue set, $V$ is the set of all vertices, $E$ is the set of all edges, and $f_v \in \mf{F}$ at each $v\in V$ is a  symmetric ternary 
% or unary (since in $\Holant^*$ context all unary signatures can be used for free) real-valued 
%constraint function  that evaluates to various values depending on how $\sigma$ assigns $R,G,B$ to the incident edges $E(v)$.
is \cite{CaiLX13}, in which a dichotomy for $\Holant^*(f)$ is proved where $f$ is a ternary complex symmetric function on domain size 3. 
% (Note that Holant problems with signatures
% of arity $\le 2$ are all P-time tractable;  the interesting case where both
% tractable and \#P-hardness occur starts with ternary signatures.) 
Another piece of work is \cite{CaiGW14} which  investigates some higher domain  problems including $\kappa$-edge coloring  and derives some dichotomies. Another one is \cite{liu2023restricted} which proves some Holant dichotomies on domains 3 and 4 in restricted settings.  \\ 
Essentially, people only have scratched a little of higher domain classification programs and there are still many unknowns to explore.\\
\noindent \textbf{My contribution} In this paper, I focus on problems on domain sizes 3 and 4 while this time I am more interested in problems with the ``Fibonacci'' feature. I explore the quintessential nature of such specific signatures. I define `` \emph{Generalized} Fibonacci Gates'' on higher domains (Definitions~\ref{db3} and ~\ref{db4}), show that those problems are tractable and deliver our combinatorial algorithms  (Theorems~\ref{dom3_fibo_theo} and ~\ref{dom4_fibo_theo}). With these new exploration and discovery, I conjecture that there are Fibonacci Gates and the associated P-time algorithms on all higher domains.

\section{Fibonacci Gates on a domain of size 3}
By an orthogonal holographic transformation \cite{valiant2008holographic}, a domain 3 ternary symmetric signature is tractable when it is of the form $g = \tent{\alpha}{3} + \tent{\beta}{3} + \tent{\gamma}{3}$ where $\alpha, \beta, \gamma \in \mathbb{C}^3$  are mutually orthogonal to each other (specifically, $Tg = a  \tent{e_1}{3}+ b\tent{e_2}{3} +c \tent{e_3}{3}$ is tractable, for some $a, b, c\in \mathbb{R}$ and $T$ is an orthogonal matrix). 
Viewing $\alpha = (\alpha_1, \alpha_2, \alpha_3)^T, \beta = (\beta_1, \beta_2, \beta_3)^T, \gamma = (\gamma_1, \gamma_2, \gamma_3)^T$, I show that we can w.l.o.g. find $i\in \{1,2,3\}$ such that $\alpha_i \beta_i \gamma_i \ne 0$. 
Otherwise, it's easy to prove that either one of $\alpha, \beta, \gamma$ is a 0 vector in which case the problem is reducible to a lower domain, or, with some permutation of the three domains, $\alpha = c (1,0,0)^T, c\ne 0$ and $\beta, \gamma$ both have the form of $(0, *, *) ^T$ and the original signature $g$ is domain separable, also reducible to a lower domain and we have the corresponding dichotomies. 

Therefore,  it suffices to consider a signature of the form $$g = p \tent{\wyy{1}{a}{b}}{3} + q \tent{\wyy{1}{c}{d}}{3} + r \tent{\wyy{1}{e}{f}}{3} $$
where \begin{equation} \label{eq_orth}
    \begin{cases}
    ac + bd = -1\\
    ae + bf = -1\\
    ce + df = -1
    \end{cases}
\end{equation}
And $g$ can be written as (here I denote by $g_{i,j,k}$ the value of $g$ when $i$ incident edges are assigned color RED and $j$ edges GREEN and $k$ edges BLUE.)
\begin{center}
\begin{tabular}{c c c c c c c}
     &&& $g_{3,0,0}$ &&&  \\
     && $g_{2,1,0}$ && $g_{2,0,1}$ &&\\
   & $g_{1,2,0}$ && $g_{1,1,1}$ && $g_{1,0,2}$ & \\
   $g_{0,3,0}$ && $g_{0,2,1}$ && $g_{0,1,2}$ && $g_{0,0,3}$ 
\end{tabular}    
\end{center}
where \begin{equation} \label{g300}
    \begin{cases} g_{3,0,0} = p + q + r  &
    g_{2,1,0} = pa+qc+re \\
    g_{2,0,1} = pb+qd+rf &
    g_{1,2,0} = pa^2+qc^2 + re^2 \\
    g_{1,1,1} = pab+qcd+ref &
    g_{1,0,2} = pb^2+qd^2+rf^2\\
    g_{0,3,0} = pa^3+qc^3+rf^3  &
    g_{0,2,1} = pa^2b+qc^2d+re^2f \\
   g_{0,1,2} = pab^2+qcd^2+ref^2 &
   g_{0,0,3} = pb^3+qd^3+rf^3
   \end{cases}
\end{equation}

Consider each \textbf{``medium-sized'' triangle} with depth 2 consisting of 6 numbers over the 10 signature values of $g$, i.e.,  
\begin{center}
\begin{tabular}{c c c c c c c}
     &&& $g_{i,j,k}$ &&&  \\
     && $g_{i-1,j+1,k}$ && $g_{i-1, j, k+1}$ &&\\
   & $g_{i-2,j+2, k}$ && $g_{i-2,j+1,k+1}$ && $g_{i-2,j,k+2}$ 
\end{tabular}    
\end{center}
There are in total 3 such triangles. It can be proved directly that there exist four parameters $s,x,y,t$ such that \begin{equation}\label{bjeq}
    \begin{cases}
        g_{i-2,j+2,k}  = g_{i,j,k} + s g_{i-1.j+1,k} + x g_{i-1, j, k+1}\\
        g_{i-2,j+1,k+1} = x g_{i-1,j+1,k} + y g_{i-1,j,k+1}\\
        g_{i-2,j,k+2} = g_{i,j,k} + y g_{i-1,j+1,k} + t g_{i-1,j,k+1}
    \end{cases}
\end{equation}
and $s,x,y,t$ are independent of $p,q,r$. Moreover, they satisfy $sy + xt + 1 = x^2 + y^2$. 

%aaaaa%aaaaa%aaaaa%aaaaa%aaaaa%aaaaa%aaaaa%aaaaa%aaaaa%aaaaa%aaaaa%aaaaa

\vspace{.1in}

\noindent The coefficients could be written in a ``\textbf{triangular}'' way below: \\
\begin{tabular}{ccc}
    &$1$&\\
    $s$&&$x$
\end{tabular}
\quad\quad
\begin{tabular}{ccc}
    &$0$&\\
    $x$&&$y$
\end{tabular}
\quad\quad 
\begin{tabular}{ccc}
    &$1$&\\
    $y$&&$t$
\end{tabular}

\vspace{-.1in}
%aaaaa%aaaaa%aaaaa%aaaaa%aaaaa%aaaaa%aaaaa%aaaaa%aaaaa%aaaaa%aaaaa%aaaaa

\noindent Simply speaking, for each \textbf{``medium-sized'' triangle} of the shape 
% \begin{center}
\begin{tabular}{c c c c c c c}
     &&& *&&&  \\
     && * && * &&\\
   & * && * && * &
\end{tabular}  
(6 signature values among the total of 10),
% \end{center}
the three signature values of the bottom line have a linear relationship with other three values on the top through $s,x,y,t$. 

\noindent In fact, in the settings above, 
\begin{equation} \label{indi}
    \begin{cases}
        x = -bdf\\
        y = -ace\\
        s = ace + a + c + e\\
        t = bdf + b + d + f
    \end{cases}
\end{equation}

\noindent It's easy to see that \begin{eqnarray*}
      sy + xt  + 1 &=& - ace(ace + a + c + e)  - bdf(bdf+b+d+f) + 1 \\
     &=& -ace(ace -a(ce+df) -c(ae+bf) -e(ac+bd))  \\ && - (ac+bd)(ae+bf)(ce+df) \\
     &=& ace(2 ace+adf+ cbf+ebd)  + bdf(2bdf+bce+dae + fac) \\ && - (ac+bd)(ae+bf)(ce+df) \\
     &=& (ace)^2 + (bdf)^2 \\
     &=& x^2 + y^2.
\end{eqnarray*}
% \vspace{-.1in}
Similarly, we can verify that the equations of $x,y,s,t$ represented by $a,b,c,d,e,f$ satisfy the linear recurrence relation~\ref{bjeq}.

\noindent\textbf{Another important thing} is that if we find such $s,x,y,t$, we can then recover the orthogonal matrix $\begin{bmatrix} 1 & 1 & 1 \\ a & c & e \\ b & d & f\end{bmatrix}$ by calculating the corresponding $a,c,e$. 
In fact, from Equation~\ref{eq_orth} and Equation~\ref{indi}, we get that 
\begin{eqnarray*}
    x^2 &=& (bdf)^2 = (-1-ac)(-1-ae)(-1-ce) \\
        &=& -((ace)^2 + (a+c+e)ace + (ac + ae + ce) + 1) \\
        &=& - y^2 + (s+y)y - (ac+ae+ce) - 1
\end{eqnarray*} 
Let $X = -y = ace, Y = s + y = a + c + e, Z = -x^2 - y^2 + (s+y)y -1 = ac + ae + ce$, 
% by Vieta's formula of cubic polynomials, 
we know that $a,c,e$ are the three roots of equation $$ t^3 - Yt^2 + Zt -X = 0.$$ 

\begin{definition} \label{db3}
We call a domain 3 symmetric signature $g$ (arity $\ge 2$) a \emph{generalized Fibonacci gate} (with parameter $s,x,y,t$ where $sy + xt + 1 = x^2 + y^2$) if \begin{equation*}
    \begin{cases}
        g_{i-2,j+2,k}  = g_{i,j,k} + s g_{i-1.j+1,k} + x g_{i-1, j, k+1}\\
        g_{i-2,j+1,k+1} = x g_{i-1,j+1,k} + y g_{i-1,j,k+1}\\
        g_{i-2,j,k+2} = g_{i,j,k} + y g_{i-1,j+1,k} + t g_{i-1,j,k+1}
    \end{cases}
\end{equation*}

$\forall i$,  $2 \le i \le \arity(g)$.  Specifically, any unary signature is a Fibonacci gate.

A set of signatures $\mathcal{F}$ is called \emph{generalized Fibonacci} if for some $s,x,y,t \in \mathbb{C}, sy + xt + 1 = x^2 + y^2$, every signature in $\mathcal{F}$ is a generalized Fibonacci gate with parameters $s,x,y,t$.

\end{definition}

\begin{theorem} \label{dom3_fibo_theo}
On a domain of size 3, for any finite set of generalized Fibonacci gates $\mathcal{F}$, the Holant problem $\Holant(\mathcal{F})$ is computable in polynomial time.
\end{theorem}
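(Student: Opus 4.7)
The plan is to prove tractability via an explicit combinatorial computation that exploits the tensor structure forced by the Fibonacci recurrence. The starting observation, extended from the arity-$3$ case analyzed in the excerpt, is that every generalized Fibonacci gate $g$ of arity $n$ with parameters $s,x,y,t$ decomposes as $g = c_1 v_1^{\otimes n} + c_2 v_2^{\otimes n} + c_3 v_3^{\otimes n}$, where $v_1, v_2, v_3 \in \mathbb{C}^3$ are the pairwise orthogonal vectors $(1,a,b)^T, (1,c,d)^T, (1,e,f)^T$ recovered from $s,x,y,t$ as shown just above Definition~\ref{db3}. Crucially, the three vectors are common to all signatures in $\mathcal{F}$ because they depend only on $s,x,y,t$, so the decomposition is with respect to a single fixed basis.

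First I would verify this decomposition for every arity $n \ge 2$. The recurrence in Definition~\ref{db3} determines the level-$(i-2)$ values from levels $i-1$ and $i$, so by descending induction on $i$, any symmetric arity-$n$ Fibonacci signature is determined by the three top values $g_{n,0,0},\ g_{n-1,1,0},\ g_{n-1,0,1}$. Hence the space of arity-$n$ Fibonacci signatures with the given parameters is at most $3$-dimensional. On the other hand, each tensor power $v_j^{\otimes n}$ satisfies the recurrence by the same algebraic identities used in the excerpt for $n=3$ (the coefficients $s,x,y,t$ were expressed purely in $a,b,c,d,e,f$ through equation~\ref{indi}), and the three are linearly independent because the $3\times 3$ matrix of their top values is nonsingular. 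They therefore span the space, yielding the claimed decomposition.

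Next I would perform a holographic reduction along this orthogonal basis. Normalize by $\tilde v_j = v_j / \sqrt{v_j \cdot v_j}$ so the matrix $M = [\tilde v_1, \tilde v_2, \tilde v_3]$ is orthogonal, i.e.\ $M^T M = I$. By Valiant's Holant theorem, applying the transformation $(M^{-1})^{\otimes \arity(g)}$ to every $g \in \mathcal{F}$ preserves $\Holant(\mathcal{F})$ (the edge equality signature is preserved since $M$ is orthogonal), and it carries each $v_j^{\otimes n}$ to a scalar multiple of $e_j^{\otimes n}$. Thus every transformed signature has the \emph{monochromatic} form $\tilde g = \sum_{j=1}^3 c'_{v,j}\, e_j^{\otimes \deg(v)}$ in the standard basis.

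Finally, for monochromatic signatures the Holant is computable directly. A nonzero contribution requires every vertex $v$ to evaluate at a pure tensor $e_{j_v}^{\otimes \deg(v)}$, which forces all edges incident to $v$ to be assigned the single color $j_v$; consistency then forces $j_u = j_v$ on every edge $uv$, so the contributing configurations are exactly the constant colorings on each connected component $C$ of $G$. Therefore
\begin{equation*}
  \Holant_{\tilde \Omega} \;=\; \prod_{C}\ \sum_{j=1}^3 \prod_{v \in C} c'_{v,j},
\end{equation*}
which is computable in polynomial time. The main obstacle will be the potentially degenerate case where $v_j \cdot v_j = 0$ for some $j$ (self-orthogonal vectors are legitimate over $\mathbb{C}$); there the normalization step fails and one must either replace $M$ by a non-orthogonal transformation and compensate with a rescaled binary edge signature, or show separately that this configuration forces $\mathcal{F}$ into a domain-separable subcase that reduces to a Holant problem on a smaller domain.
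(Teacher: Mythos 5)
Your route---decompose every gate as $c_1v_1^{\otimes n}+c_2v_2^{\otimes n}+c_3v_3^{\otimes n}$ in a basis determined by $s,x,y,t$, then apply an orthogonal holographic transformation to reduce to monochromatic signatures---is genuinely different from the paper's proof, which uses no decomposition at all: it cuts every edge of $\Gamma$ and re-merges the dangling edges one at a time, verifying directly (using only the relation $sy+xt+1=x^2+y^2$) that both merge operations preserve the recurrence of Definition~\ref{db3}, so the final arity-$0$ signature is the Holant value. That argument is uniform in the parameters and needs no recovery of $a,b,c,d,e,f$.

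The gap in your proposal is the degenerate parameter case, and it is not a peripheral issue. Definition~\ref{db3} allows \emph{any} $(s,x,y,t)$ with $sy+xt+1=x^2+y^2$, so the theorem you must prove includes tuples for which the cubic $t^3-Yt^2+Zt-X=0$ has repeated roots, for which some of $-1-ac$, $-1-ae$, $-1-ce$ vanish so that $b,d,f$ cannot be recovered from $x=-bdf$ and Equation~\ref{eq_orth}, or for which a recovered vector is isotropic. Your key step ``the three tensor powers span the solution space because the $3\times 3$ matrix of their top values is nonsingular'' is asserted, not proved, and it fails exactly there: three pairwise orthogonal vectors in $\mathbb{C}^3$ are linearly independent if and only if none is isotropic (their Gram matrix is diagonal), so the ``main obstacle'' you defer at the end is the same situation in which $v_1^{\otimes n},v_2^{\otimes n},v_3^{\otimes n}$ do not span the $3$-dimensional space of signatures satisfying the recurrence, the decomposition claim itself is false (the solution space then contains non-pure-tensor, Jordan-type solutions), and the complex orthogonal $M$ you transform by does not exist. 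Since you only sketch two possible repairs without carrying either out, your argument as written establishes the theorem only for generic $(s,x,y,t)$; to match the statement you must either handle the degenerate tuples separately or switch to an argument, like the paper's merging induction, that never leaves the parameter level.
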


\begin{proof}   
If $\Gamma_1, \Gamma_2, ..., \Gamma_k$ are the connected components of a graph $\Gamma$, then 
$$ \Holant_\Gamma =  \prod_{j=1}^{k}\Holant_{\Gamma_j}.$$

So we only need to consider connected graphs as inputs.

Suppose $\Gamma$ has $n$ nodes and $m$ edges. First we cut all the edges in $\Gamma$. 
A node with degree $d$ can be viewed as an $\mathcal{F}$-gate with $d$ dangling edges. 
Now step by step we merge two dangling edges into one regular edge in the original graph, until we recover $\Gamma$ after $m$ steps. 
We prove that all the intermediate $\mathcal{F}$-gates still have generalized Fibonacci signatures with the same parameters $s,x,y,t$, and at every step we can compute the intermediate signature in polynomial time.
After $m$ steps we get $\Gamma$ as an $\mathcal{F}$-gate with no dangling edge; the only value of its signature is the Holant value we want.
To carry this out, we only need to prove that it is true for one single step.
There are two cases, depending on whether the two dangling edges to be merged are in the same component or not.
% \vspace{-.2in}
\begin{figure}[h]
    \centering
    \includegraphics[scale=0.22]{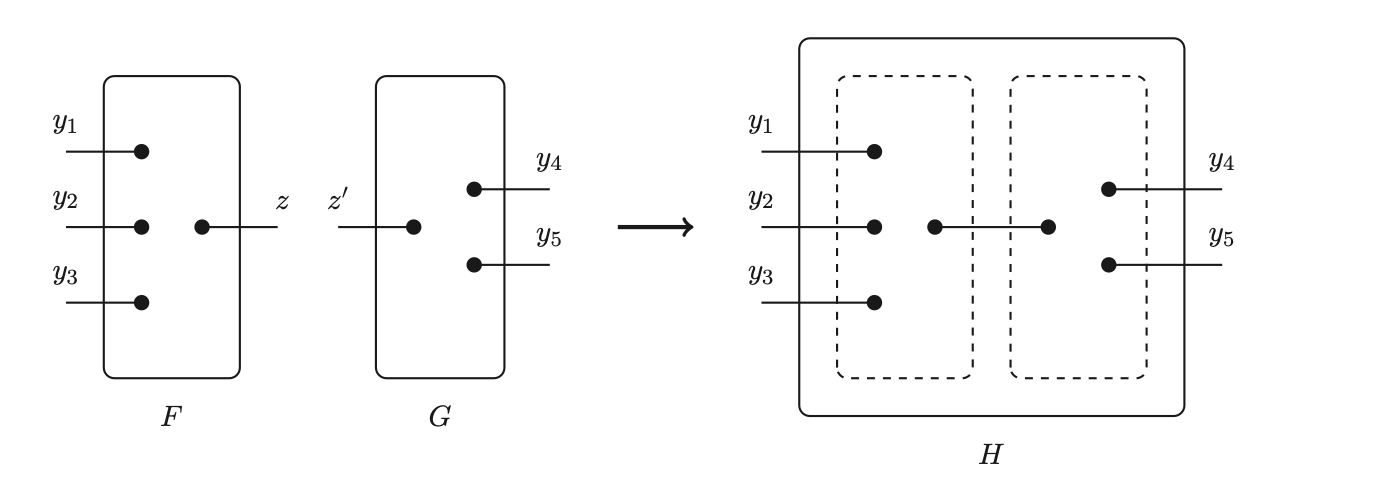}
    \caption{First operation}
    \label{fig_1}
\end{figure}
% \vspace{ -.1in}
\begin{figure}[h]
    \centering
    \includegraphics[scale=0.2]{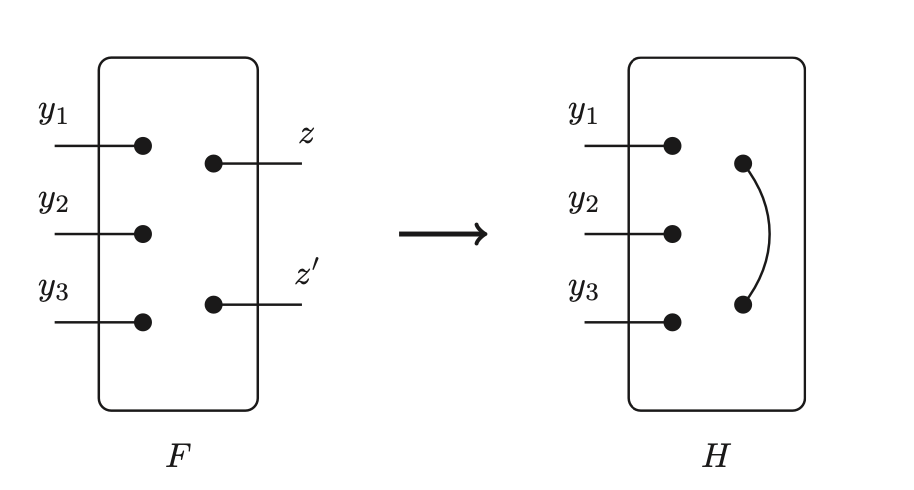}
    \caption{Second operation}
    \label{fig_2}
\end{figure}
% \vspace{-.1in}
These two operations are illustrated in Figures~\ref{fig_1} and ~\ref{fig_2}.

In the first case, the two dangling edges belong to two components before their merging (Figure~\ref{fig_1}). 
Let $F$ have dangling edges $y_1, y_2, ..., y_r, z$ and $G$ have dangling edges $y_{r+1},..., y_{r+w}, z'$. 
After merging $z$ and $z'$, we have a new $\mathcal{F}$-gate $H$ with dangling edges $y_1, ...,  y_{r+w}$. 
Inductively the signatures of $F$ and $G$ are both generalized Fibonacci gates with the same parameters $s,x,y,t$. 
We show that this remains so for the resulting $\mathcal{F}$-gate H. 

We first prove that $H$ is symmetric.
We only need to show that the value of $H$ is unchanged if the values of two inputs are exchanged.
Because $F$ and $G$ are symmetric, if both inputs are from $\{y_1,..., y_r\}$ or from $\{y_{r+1}, ..., y_{r+w}\}$, the value of $H$ is clearly unchanged. 
Suppose one input is from $\{y_1,..., y_r\}$ and the other is from $\{y_{r+1}, ..., y_{r+w}\}$.
By the symmetry of $F$ and $G$ we may assume these two inputs are $y_1$ and $y_{r+1}$. 
Thus we will fix an arbitrary assignment for $y_2, ..., y_r, y_{r+2}, ..., y_{r+w}$, 
and we want to show $H(u, y_2, ..., y_r, v, y_{r+2}, ..., y_{r+w}) = H(v, y_2, ..., y_r, u, y_{r+2}, ..., y_{r+w})$ where $u,v\in \{R,G,B\}$.

We will suppress the fixed values $y_2, .., y_r, y_{r+2}, ..., y_{r+w}$ and denote 
\vspace{-.05in}
$$ F_{uz} = F(u,y_2, ..., y_r,z)$$
$$G_{vz} = G(v, y_{r+2}, ..., y_{r+w}, z)$$
$$ H_{uv} = H(u, y_2, ..., y_r, v, y_{r+2}, ..., y_{r+w}).$$
Then by the definition of Holant, $H_{uv} = F_{uR}G_{vR} + F_{uG}G_{vG} + F_{uB}G_{vB}$, $u,v\in \{R,G,B\}$.
To make the notation simpler, we use subscript $\{1,2,3,4,5,6\}$ to represent $\{RR, RG, RB, GG,GB, BB\}$ respectively  now.
We also use $M_{ij}$ to denote $F_iG_j$.

Because $F$ is a generalized Fibonacci gate with parameters $s,x,y,t$, we have $F_4 = F_1 + sF_2 + x F_3$, $F_5 = xF_2 + yF_3$, and $F_6 = F_1 + yF_2 + tF_3$. Similar for $G$. 
Then we have the following \begin{eqnarray*}
H_{RG} & = & F_{RR}G_{GR} + F_{RG}G_{GG} + F_{RB}G_{GB} = F_1 G_2 + F_2 G_4 + F_3 G_5 \\
   & =  & F_1 G_2 + F_2 (G_1 + s G_2 + x G_3) + F_3 (x G_2 + y G_3)\\
   & = & M_{12} + M_{21} + sM_{22}+ xM_{23} + xM_{32} + yM_{33}
\end{eqnarray*}
\vspace{-.3in}
\begin{eqnarray*}
H_{GR} & =& F_{GR}G_{RR} + F_{GG} G_{RG} + F_{GB}G_{RB}  = F_2G_1 + F_4G_2 + F_5G_3 \\
    & =& F_2G_1 + (F_1 + sF_2 + xF_3)G_2 + (xF_2 +yF_3) G_3 \\
    & = & M_{12} + M_{21} + sM_{22} + xM_{23} + xM_{32} + yM_{33}
\end{eqnarray*} 
So $H_{RG} = H_{GR}$. Similarly we can prove $H_{uv} = H_{vu}$ for other $u,v \in \{R,G,B\}$. 
Hence $H$ is symmetric. We thus can use $1,2,3,4,5,6$ as subscripts for $H$ too.

Now we show that $H(y_1, ..., y_{r+w})$ is also a generalized Fibonacci gate with parameters $s,x,y,t$.
Since we have proved that $H$ is symmetric, we can choose any two input variables to prove it being Fibonacci. 
Again, we choose $y_1$ and $y_{r+1}$.
(This assumes that $y_1$ and $y_{r+1}$ exist, i.e., $F$ and $G$ are not unary functions. 
If either one of them is unary, the proof is just as easy.)
For any fixed values of all other variables, we have
% \vspace{-.1in}
\begin{eqnarray*}
    H_1  & = & H_{RR} = F_{RR}G_{RR} + F_{RG}G_{RG} + F_{RB}G_{RB}  = F_1G_1 + F_2G_2 + F_3G_3 \\
        & = & M_{11} + M_{22} + M_{33}
\end{eqnarray*}
\vspace{-.3in}
\begin{eqnarray*}
    H_2  & = & H_{RG} = F_{RR}G_{GR} + F_{RG}G_{GG} + F_{RB}G_{GB}  = F_1 G_2 + F_2 G_4 + F_3 G_5 \\
   & =  & F_1 G_2 + F_2 (G_1 + s G_2 + x G_3) + F_3 (x G_2 + y G_3)\\
   & = & sM_{22}+ yM_{33} + (M_{12} + M_{21}) +  x(M_{23} + M_{32}) 
\end{eqnarray*}
\vspace{-.3in}
\begin{eqnarray*}
    H_3 & = & H_{RB} = F_{RR}G_{BR} + F_{RG}G_{BG} + F_{RB}G_{BB} = F_1G_3 + F_2G_5 + F_3G_6 \\
    &=& F_1G_3 + F_2(xG_2 + yG_3) + F_3(G_1 + yG_2 + tG_3) \\
    &=& xM_{22} + t M_{33} + (M_{13} + M_{31}) + y( M_{23} + M_{32}) 
\end{eqnarray*}
\vspace{-.3in}
\begin{eqnarray*}
    H_4 & = & H_{GG} = F_{GR}G_{GR} + F_{GG}G_{GG} + F_{GB}G_{GB} = F_2G_2 + F_4G_4 + F_5G_5 \\
     &=& F_2G_2 + (F_1 + sF_2 + xF_3)(G_1 + sG_2 + xG_3)  + (xF_2 + yF_3)(xG_2 + yG_3)\\
     &=& M_{11} +  (s^2 + x^2 + 1)M_{22} + (x^2 + y^2)M_{33} \\
      & & + s (M_{12} + M_{21})  + x(M_{13} + M_{31}) + x(s+y)(M_{23} + M_{32})
\end{eqnarray*}
\vspace{-.3in}
\begin{eqnarray*}
    H_5 & = & H_{GB} = F_{GR} G_{BR}  + F_{GG}G_{BG} + F_{GB}G_{BB}  = F_2G_3 + F_4G_5 + F_5G_6 \\
     &=& F_2G_3 + (F_1+sF_2 + xF_3)(xG_2 + yG_3)  + (xF_2 + yF_3)(G_1 + yG_2 + tG_3) \\
     &=& x(s+y)M_{22} + y(x+t)M_{33} + x(M_{12} + M_{21})\\
     & &  + y(M_{13} +  M_{31}) + (1+sy+xt)M_{23} + (x^2+y^2)M_{32}\\
     &=& x(s+y)M_{22} + y(x+t)M_{33}  + x(M_{12} + M_{21}) + y(M_{13} +  M_{31}) \\ &&+ (x^2+y^2)(M_{23}+M_{32})
\end{eqnarray*}
% \vspace{-.1in}
(Here we use the condition that $x^2 + y^2  = sy + xt + 1$.)
\begin{eqnarray*}
    H_6 & = & H_{BB} =  F_{BR}G_{BR} + F_{BG}G_{BG}+F_{BB}G_{BB}  = F_3G_3 + F_5G_5 + F_6 G_6 \\
    &=& F_3G_3 + (xF_2 + yF_3)(xG_2 + yG_3) + (F_1 + yF_2 + tF_3)(G_1 + yG_2 + tG_3) \\
    &=& M_{11} + (x^2 + y^2) M_{22} + (y^2 + t^2 + 1) M_{33} \\
    & & + y(M_{12} + M_{21}) + t (M_{13} + M_{31}) + y(t+x) (M_{23} + M_{32})
\end{eqnarray*}

Using the relation of $s,x,y,t$ (i.e., $sy + xt + 1 = x^2 + y^2$), we can easily prove that $H_4 = H_1 + sH_2 + xH_3$, $H_5 = xH_2 + yH_3$, $H_6 = H_1 + yH_2 + tH_3$.

Next we consider the second case, where the two dangling edges to be merged are in the same component (Figure~\ref{fig_2}). 
Obviously, the signature for the new gate $H$ is symmetric. 
If $F$ below  is the signature before the merging operation, 
\begin{center}
\begin{tabular}{c c c c c c c c c}
     &&&& $F_1$ &&&&  \\
     &&& $F_2$ && $F_3$ &&&\\
   && $F_4$ && $F_5$ && $F_6$ && \\
   &$F_7$ && $F_8$ && $F_9$ && $F_{10}$ &\\
  $F_{11}$ && $F_{12}$ &&$F_{13}$ &&$F_{14}$ && $F_{15}$ 
\end{tabular} \\  
\vspace{.1in}
......
\end{center}
then the signature after the merging operation is $H$ below 
{\footnotesize
\begin{center}
\begin{tabular}{c c c c c}
     && $F_1 + F_4 + F_6$ &&  \\
     & $F_2 + F_7 + F_9$ && $F_3 + F_8 + F_{10}$ &\\
    $F_4 + F_{11} + F_{13}$ && $F_5 + F_{12} + F_{14}$ && $F_6 + F_{13} + F_{15}$ 
\end{tabular}   \\  \vspace{.1in}
...... 
\end{center}
}

Such an operation preserves the linear recurrence. 
It follows that $H$ is also a generalized Fibonacci gate with parameters $s,x,y,t$, and the signature has already been computed.

\end{proof}

\section{Fibonacci Gates on a domain of size 4}
We now move on to domain 4. Similarly, through an orthogonal transformation, we know that a ternary domain 4 symmetric signature is tractable when it is of the form $g = \tent{\alpha}{3} + \tent{\beta}{3} + \tent{\gamma}{3} + \tent{\delta}{3}$ where $\alpha, \beta, \gamma, \delta \in \mathbb{C}^3$ are mutually orthogonal to each other. 
Let $\alpha = (\alpha_1, \alpha_2, \alpha_3, \alpha_4)^T$, $\beta = (\beta_1, \beta_2, \beta_3, \beta_4)^T$, $\gamma = (\gamma_1, \gamma_2, \gamma_3, \gamma_4)^T$, $\delta = (\delta_1, \delta_2, \delta_3, \delta_4)^T$. 
% It is easy to show that if we cannot find $i\in\{1,2,3,4\}$ such that $\alpha_i\beta_i\gamma_i\delta_i \ne 0$, then $g$ is either domain separable or
Like on domain 3, we focus on the case that there is some $i\in \{1,2,3,4\}$ such that $\alpha_i\beta_i\gamma_i\delta_i \ne 0$.
By renaming the $i$-th domain to be the first domain, we can assume $\alpha_1\beta_1\gamma_1\delta_1 \ne 0$. 
Then it has the form of 
\begin{equation} \label{eq_d4g}
    g = q\xyy{1}{t_1}{t_2}{t_3}^{\otimes 3} 
+  r\xyy{1}{t_4}{t_5}{t_6}^{\otimes 3} 
+ s\xyy{1}{t_7}{t_8}{t_9}^{\otimes 3} 
+  t\xyy{1}{t_{10}}{t_{11}}{t_{12}}^{\otimes 3}
\end{equation}  where 
% \begin{eqnarray} \label{eq_orth4}
%     t_1t_4 + t_2t_5 + t_3t_6 = -1\\
%     t_1t_7
% \end{eqnarray}
the four vectors are orthogonal to each other and $qrst\ne 0$.

\begin{figure}[h!] 
\centering
\begin{tikzpicture} [scale=0.7] 
\draw[very thick] (6.0, 0.5)--(0.3,2.5)--(4.5,7.8)--(8.5,3.3)--cycle;
\draw[very thick] (6.0, 0.5)--(4.5,7.8);
\draw[very thick, dashed] (0.3,2.5) -- (8.5,3.3);
\filldraw[fill=red] (4.5,7.8) circle(.1) node[anchor = south]{$g_{3,0,0,0}$} ;
\filldraw[fill=pink] (3.1,6.0) circle(.1) node[anchor = south east]{$g_{2,0,1,0}$} ;
\filldraw[fill=purple] (1.7,4.3) circle(.1) node[anchor = south east]{$g_{1,0,2,0}$} ;
\filldraw[fill=blue] (0.3,2.5) circle(.1) node[anchor = east]{$g_{0,0,3,0}$} ;
\filldraw[fill=pink] (4.9,5.7) circle(.1) node[anchor = east]{$g_{2,1,0,0}$} ;
\filldraw[fill=purple] (5.4,3.5) circle(.1) node[anchor=east]{$g_{1,2,0,0}$};
\filldraw[fill=green] (6.0,0.5) circle(.1) node[anchor=north] {$g_{0,3,0,0}$};
\filldraw[fill=pink] (5.8,6.3) circle(.1) node[anchor = south west] {$g_{2,0,0,1}$};
\filldraw[fill= purple] (7.2,4.8) circle(.1) node[anchor=south west] {$g_{1,0,0,2}$};
\filldraw[fill=white] (8.5,3.3) circle(.1) node[anchor = west] {$g_{0,0,0,3}$};
\filldraw[fill=gray] (3.0,2.8) circle(.1) node[anchor=south] {$g_{0,0,2,1}$};
\filldraw[fill=gray] (5.7, 3.0) circle(.1) node[anchor=south west] {$g_{0,0,1,2}$};
\filldraw[fill=gray] (2.2,1.8) circle(.1) node[anchor=north east] {$g_{0,1,2,0}$};
\filldraw[fill=gray] (4.1,1.1) circle(.1) node[anchor=north east] {$g_{0,2,1,0}$};
\filldraw[fill=gray] (6.8,1.4) circle(.1) node[anchor=north west] {$g_{0,2,0,1}$};
\filldraw[fill=gray] (7.7,2.4) circle(.1) node[anchor=north west] {$g_{0,1,0,2}$};

\filldraw[fill=yellow] (3.6,3.8) circle(.1) node[anchor=south east] {$g_{1,1,1,0}$};
\filldraw[fill=yellow] (6.3,4.2) circle(.1) node[anchor=south] {$g_{1,1,0,1}$};
\filldraw[fill=yellow] (4.5,4.5) circle(.1) node[anchor=south east] {$g_{1,0,1,1}$};
\filldraw[fill=gray] (4.9,2.1) circle(.1) node[anchor=east] {$g_{0,1,1,1}$};

% \draw[very thick] (1.3,7.3)--(3.3,3.8);
% \draw[very thick] (5.3,7.3)--(3.3, 3.8);
% \draw[very thick] (3.3, 3.8)--(3.3, 2.9);
% \draw[very thick] (3.3,6.1)--(3.3,7.3);
% \draw[very thick] (3.3,6.1)--(4.3, 5.55);
% \draw[very thick] (3.3,6.1)--(2.3,5.55);
% \filldraw[fill=blue] (1.2,7.2) rectangle(1.4,7.4);  % 1.3  7.3
% \filldraw[fill=blue] (5.2,7.2) rectangle(5.4,7.4);  % 5.3 7.3
% \filldraw[fill=blue] (3.2,3.7) rectangle(3.4,3.9); % 3.3  3.8
% \filldraw[fill=green] (3.3,7.3)circle(0.1);
% \filldraw[fill=green] (4.3,5.55)circle(0.1);
% \filldraw[fill=green] (2.3,5.55)circle(0.1);
% \filldraw[fill=blue] (3.2, 6.0) rectangle(3.4, 6.2);     % 3.3, 6.1
\end{tikzpicture}
  \caption{$g$}
  \label{pmd}
\end{figure}

Figure~\ref{pmd} shows a domain 4 signature whose four domains are $\{R,G,B,W\}$ respectively. 
Here we denote by $g_{w,x,y,z}$ the value of $g$ when $w$ incident edges are assigned color RED, $x$ edges GREEN, $y$ edges BLUE and $z$ edges WHITE.
Similar to domain 3 signatures, if it is defined by Equation~\ref{eq_d4g},
then there exist  10 parameters $a,b,c,d,e,f,h,i,j,p$ such that in each \textbf{``medium-sized'' tetrahedron} (i.e., depth-2 tetrahedron) consisting of 10 numbers in ``3 layers'', \begin{equation}\label{eq_d4p}\begin{cases}
    g_{w-2,x+2,y,z} = g_{w,x,y,z} + a g_{w-1,x+1,y,z} + b g_{w-1,x,y+1,z} + c g_{w-1,x,y,z+1} \\
    g_{w-2,x,y+2,z} = g_{w,x,y,z} + d g_{w-1,x+1,y,z} + e g_{w-1,x,y+1,z} + f g_{w-1,x,y,z+1} \\
    g_{w-2,x,y,z+2} = g_{w,x,y,z} + h g_{w-1,x+1,y,z} + i g_{w-1,x,y+1,z} + j g_{w-1,x,y,z+1} \\
    g_{w-2,x+1,y+1,z} = b g_{w-1,x+1,y,z} + d g_{w-1,x,y+1,z} + p g_{w-1,x,y,z+1} \\
    g_{w-2,x+1,y,z+1} = c g_{w-1,x+1,y,z} + p g_{w-1,x,y+1,z} + h g_{w-1,x,y,z+1} \\
    g_{w-2,x,y+1,z+1} = p g_{w-1,x+1,y,z} + f g_{w-1,x,y+1,z} + i g_{w-1,x,y,z+1} 
\end{cases}
\end{equation}
and the 10 parameters also satisfy \begin{equation} \label{eq_d4q} \begin{cases}
    ad + be + cf + 1 = b^2 + d^2 + p^2 \\
    dh + ei + fj + 1 = f^2 + i^2 + p^2 \\
    ha + ib + jc + 1 = h^2 + c^2 + p^2 \\
    p^3 - (bi+cf+dh+1)p + bfh + cdi = 0
\end{cases}
 \end{equation}   

 \noindent The coefficients could be written as in Figure~\ref{fig:tet}. 
 % \begin{figure}
 %     \centering
 %     % \includegraphics{}
 %     \begin{tabular}{ccccccccccccc} \label{tet}
 %     $e$ &- & -& -& $f$&-  &- & -& $i$&- &- &- &$j$  \\
 %      & $\backslash$ & $1$ & $/$ & & $\backslash$ & $0$& $/$ & & $\backslash$ & $1$ & $/$ &  \\
 %      & &$d$ & - &- &- & $p$ &- &- &- &$h$ & &  \\
 %      & & &$\backslash$ &$0$ & $/$ & &$\backslash$ & $0$ &$/$ & & &  \\
 %      & & & & $b$&- &- &- & $c$ & & & &  \\
 %      & & & & & $\backslash$& $1$ & $/$ & & & & &  \\
 %      & & & & & & $a$ & & & & & &  
 % \end{tabular}
 %     \caption{Parameters of domain 4 fibonacci gate}
 %     \label{fig:tet}
 % \end{figure}
% \vspace{-.4in}
 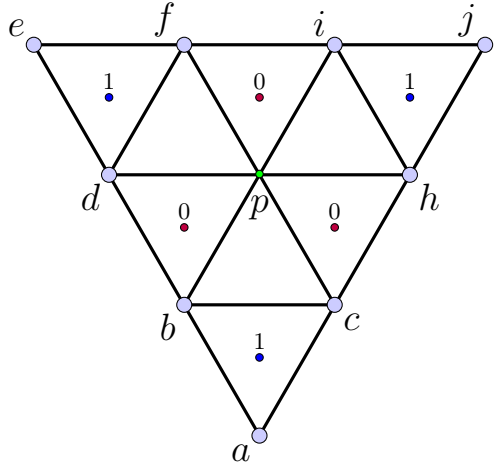
\begin{figure}[] 
\centering
  \begin{tikzpicture} [scale=1.0]
        % \draw[step=1cm, green,very thin](-1.9,-1.9) grid (5.9,5.9);
        % \draw[thick, ->] (0,0) -- (4.5,0) node[anchor=north west] {x axis};
        % \draw[thick, ->] (0,0) -- (0,4.5) node[anchor=south east] {y axis};

        \draw[very thick] (0,0) -- (6,0 ) -- (3,-5.2) -- cycle;
        \draw[very thick] (1,-1.73) -- (2,0) --(3, -1.73) -- (4,0)-- (5, -1.73) --(3, -1.73) -- (4,-3.46) --(2,-3.46)--(3, -1.73) -- cycle;
        
       \filldraw[fill = blue!20!white] (0,0) circle (1mm) node[anchor=south east] {\Large{$e$}};
        \filldraw[fill = blue!20!white] (2,0) circle (1mm) node[anchor=south east ] {\Large{$f$}};
       \filldraw[fill = blue!20!white] (4,0) circle (1mm) node[anchor=south east] {\Large{$i$}};
       \filldraw[fill = blue!20!white] (6,0) circle (1mm) node[anchor=south east] {\Large{$j$}};
       \filldraw[fill = blue!20!white] (1,-1.73) circle (1mm) node[anchor=north east] {\Large{$d$}};
       \filldraw[fill = blue!20!white] (2,-3.46) circle (1mm) node[anchor=north east] {\Large{$b$}};
       \filldraw[fill = blue!20!white] (3,-5.2) circle (1mm) node[anchor=north east] {\Large{$a$}};
       \filldraw[fill = blue!20!white] (4,-3.46) circle (1mm) node[anchor=north west] {\Large{$c$}};
       \filldraw[fill = blue!20!white] (5,-1.73) circle (1mm) node[anchor=north west] {\Large{$h$}};
       \filldraw[fill = green] (3,-1.72) circle (.5mm) node[anchor=north, inner ysep=8pt] {\Large{$p$}};

        \filldraw[fill=blue](1,-0.7) circle (.5mm) node[anchor=south] {1};
        \filldraw[fill=purple](3,-0.7) circle (.5mm) node[anchor=south] {0};
        \filldraw[fill=blue](5,-0.7) circle (.5mm) node[anchor=south] {1};
        \filldraw[fill=purple](2,-2.43) circle (.5mm) node[anchor=south] {0};
        \filldraw[fill=purple](4,-2.43) circle (.5mm) node[anchor=south] {0};
        \filldraw[fill=blue](3,-4.16) circle (.5mm) node[anchor=south] {1};
        
        % \draw[very thick] (0.5, 0.5) ..controls(2, 0).. (3.5, 0.5);
        % \draw[very thick] (3.5,0.5) -- (5, 0.5);
        % \draw[very thick] (0.5, 0.5) -- (2.0, 1.0);
        % \draw[very thick] (2.0, 1.0) -- (3.5, 0.5);
        % \draw[fill=blue] (1.9, 0.9) rectangle (2.1, 1.1);
        % \filldraw[fill=green] (0.5,0.5) circle (0.1cm);
        % \filldraw[fill=green] (3.5,0.5) circle (0.1cm);
    \end{tikzpicture}
   \caption{Parameters of domain 4 fibonacci gate}
     \label{fig:tet}
\end{figure}
 It's \textbf{a top view of six tetrahedra} each consisting of 4 values. Three of them have a $1$ on top and three of them have a $0$ on top.  
 Each of these tetrahedra corresponds to a linear relationship mentioned in Equation~\ref{eq_d4p}.
 For example, the bottom tetrahedron is the one with a ``$1$'' on top and $a,b,c$ at the bottom. 
 This corresponds to the linear relation $g_{w-2,x+2,y,z} = g_{w,x,y,z} + a g_{w-1,x+1,y,z} + b g_{w-1,x,y+1,z} + c g_{w-1,x,y,z+1}$ in Equation~\ref{eq_d4p}. 
 Moreover, the ``dot product'' (i.e., sum of element-wise product) of every two different tetrahedra with a $1$ on top is equal to the ``dot product'' of the tetrahedron (with a $0$ on top) between the two  with itself. 
 For example, in Figure~\ref{fig:tet}, for tetrahedron indicated by $1,d,e,f$ and tetrahedron indicated by $1,h,i,j$, the tetrahedron between them is the one indicated by $0,p,f,i$, and $ \la \xyy{1}{d}{e}{f}, \xyy{1}{h}{y}{j} \ra = \la \xyy{0}{p}{f}{i}, \xyy{0}{p}{f}{i}  \ra $ is one of the first three equations in \ref{eq_d4q}. 
 Last but not least, there is a \textbf{cubic equation} regarding the center parameter $p$ in Equation~\ref{eq_d4q}, $p^3 - (bi+cf+dh+1)p + bfh + cdi = 0$. 

 It's worth mentioning that there are also another 3 quadratic relationships (Equation~\ref{eq_d4r}) among the parameters, however, with some calculation, they are incorporated in Equation~\ref{eq_d4q}. 
 \begin{equation} \label{eq_d4r} \begin{cases}
    ap + bf + ci = bc +  dp + ph  \\
    bh + di + pj = cp + pf + hi \\
    cd + pe + hf = bp + df + pi 
\end{cases}
 \end{equation} 
 In Figure~\ref{fig:tet}, basically Equation~\ref{eq_d4r} says that the ``dot product'' of each ``corner tetrahedron'' and the one in the opposite direction is equal to the ``dot product'' of the two ``side tetrahedra''. 
 For example, the first equation shows that tetrahedron $1,a,b,c$ dot producting with tetrahedron $0,p,f,i$ produces the same result as tetrahedron $0,b,d,p$ dot producting with $0,c,p,h$. 
 These 3 relations will be useful when we prove the combinatorial algorithm later.

 In this way, the 10 parameters have in total four constraints \ref{eq_d4q} and so the degree of freedom is 6 which is the same as that of a matrix whose four columns are orthogonal to each other written in $\begin{bmatrix} 
 1 & 1& 1& 1 \\
 t_1 & t_4 & t_7 & t_{10} \\
 t_2 & t_5 & t_8 & t_{11} \\
 t_3 & t_6 & t_9 & t_{12}
 \end{bmatrix} $.
With the 10 parameters satisfying Equation~\ref{eq_d4q}, we can form uniquely a $n$-arity domain 4 signature from the ``top'' 4 signature values, i.e., $g_{n,0,0,0}$, $g_{n-1,1,0,0}$, $g_{n-1,0,1,0}$, and $g_{n-1,0,0,1}$. 

Similar to domain 3, we have the following definition of Fibonacci gates and a corresponding tractable theorem. We leave the verbose proof in Appendix.

\begin{definition} \label{db4}
We call a domain 4 symmetric signature $g$ (arity $ \ge 2$) a \emph{generalized Fibonacci gate} (with parameters $a,b,c,d,e,f,h,i,j,p$ which satisfy Equation~\ref{eq_d4q}) if it satisfies Equation~\ref{eq_d4p} $\forall w, 2 \le w \le \arity(g)$. Any unary signature is also a Fibonacci gate. 

A set of signatures $\mathcal{F}$ is called \emph{generalized Fibonacci} if for some $a,b,c,d,e$, $f,h,i,j,p \in \mathbb{C}$ which satisfy Equation~\ref{eq_d4q}, every signature in $\mathcal{F}$ is a generalized Fibonacci gate with parameters $a,b,c,d,e,f,h,i,j,p$.

\end{definition}

% And we have the following theorem.

\begin{theorem} \label{dom4_fibo_theo}
On a domain of size 4, for any finite set of generalized Fibonacci gates $\mathcal{F}$, the Holant problem $\Holant(\mathcal{F})$ is computable in polynomial time.
\end{theorem}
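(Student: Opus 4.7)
The plan is to mirror the domain 3 proof of Theorem~\ref{dom3_fibo_theo}, adapted to the more elaborate tetrahedral bookkeeping on domain 4. By multiplicativity of $\Holant$ over connected components, I may restrict to a connected input $\Gamma$. After cutting all $m$ edges, each vertex becomes an $\mathcal{F}$-gate with dangling edges; I then merge two dangling edges at a time and maintain the inductive invariant that every intermediate $\mathcal{F}$-gate has a symmetric signature which is a generalized Fibonacci gate with the fixed parameters $a,b,c,d,e,f,h,i,j,p$ satisfying Equation~\ref{eq_d4q}. After $m$ such polynomial-time merges, the resulting $\mathcal{F}$-gate has no dangling edges and its single value is $\Holant_\Gamma$; so it suffices to handle one merge.

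Two cases arise. In the self-loop case, where the two merged dangling edges belong to the same component with signature $F$, the operation replaces each signature entry by the sum over the four colors on the loop of four specific entries from $F$. Since Equation~\ref{eq_d4p} is linear in the signature values, each of the six recurrences on $F$ is preserved term by term, so the resulting signature is immediately symmetric and Fibonacci, just as in the domain 3 proof. The real content is the cross-component case, where two Fibonacci gates $F$ and $G$ with common parameters are joined to form $H$ via $H_{uv} = \sum_{c \in \{R,G,B,W\}} F_{uc} G_{vc}$. Here I adopt subscripts $1,\dots,10$ for the color pairs $\{RR,RG,RB,RW,GG,GB,GW,BB,BW,WW\}$ (using the symmetry of $F$ and $G$, after fixing values on the other dangling edges arbitrarily) and let $M_{ij} = F_i G_j$.

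The calculation then splits into two stages. First, symmetry of $H$ is established by expanding each $H_{uv}$ into a quadratic form in the $M_{ij}$'s, substituting the expressions from Equation~\ref{eq_d4p} for the ``bottom layer'' entries of $F$ and $G$, and observing that the result is invariant under $i \leftrightarrow j$; this step uses only linearity of the recurrences, as in the domain 3 case. Second, the six linear recurrences of Equation~\ref{eq_d4p} on $H$ are verified by computing $H_1,\dots,H_{10}$ as explicit quadratic forms in the $M_{ij}$ and matching coefficients against the six recurrences. This yields a polynomial-time procedure for each merge and completes the induction.

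The main obstacle will be the coefficient-matching step. In the domain 3 proof, the single identity $sy + xt + 1 = x^2 + y^2$ sufficed. For domain 4, I expect the full system of Equation~\ref{eq_d4q} together with Equation~\ref{eq_d4r} to be needed: the three inner-product identities in Equation~\ref{eq_d4q} should govern the diagonal coefficients in the three ``corner'' recurrences ($H_{w-2,x+2,y,z}$, $H_{w-2,x,y+2,z}$, $H_{w-2,x,y,z+2}$), while the three relations of Equation~\ref{eq_d4r} should balance the cross-terms involving the center parameter $p$ in the three ``mixed'' recurrences ($H_{w-2,x+1,y+1,z}$, $H_{w-2,x+1,y,z+1}$, $H_{w-2,x,y+1,z+1}$). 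The cubic identity on $p$ in Equation~\ref{eq_d4q} should then surface as the scalar residue once all other coefficients match. This coheres with the paper's remark that the relations in Equation~\ref{eq_d4r} are useful when proving the algorithm. Once these identities close the system, $H$ inherits both symmetry and the Fibonacci recurrence, completing the induction.
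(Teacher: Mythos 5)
Your proposal follows essentially the same route as the paper's appendix proof: reduce to connected graphs, cut all edges and re-merge one at a time maintaining the Fibonacci-with-fixed-parameters invariant, handle the self-loop merge by linearity, and handle the cross-component merge by expanding each $H_{uv}$ as a quadratic form in the $M_{ij}$, proving symmetry first and then matching coefficients against the six recurrences using Equations~\ref{eq_d4q} and~\ref{eq_d4r}. The coefficient-matching algebra you defer is exactly the (routine but lengthy) computation the paper carries out, so the plan is sound and matches the paper's argument.
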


%
% the environments 'definition', 'lemma', 'proposition', 'corollary',
% 'remark', and 'example' are defined in the LLNCS documentclass as well.
%

\begin{credits}
\subsubsection{\ackname} I sincerely thank my PhD advisor 
Jin-Yi Cai.
% Placeholder.
He really ought to be a coauthor for this paper for his contribution, but graciously declined. I am greatly indebted to his guidance and technical help. 
I also would like to thank 
Austen Z. Fan 
% Placeholder
for his insightful discussion.

% \subsubsection{\discintname}
% It is now necessary to declare any competing interests or to specifically
% state that the authors have no competing interests. Please place the
% statement with a bold run-in heading in small font size beneath the
% (optional) acknowledgments\footnote{If EquinOCS, our proceedings submission
% system, is used, then the disclaimer can be provided directly in the system.},
% for example: The authors have no competing interests to declare that are
% relevant to the content of this article. Or: Author A has received research
% grants from Company W. Author B has received a speaker honorarium from
% Company X and owns stock in Company Y. Author C is a member of committee Z.
\end{credits}
%
% ---- Bibliography ----
%
% BibTeX users should specify bibliography style 'splncs04'.
% References will then be sorted and formatted in the correct style.
%
% \bibliographystyle{splncs04}
% \bibliography{mybibliography}
%

\bibliographystyle{splncs04}
\bibliography{mybibfile}

\newpage
\section*{Appendix}

We give the proof of Theorem~\ref{dom4_fibo_theo} here.
\begin{proof}   
If $\Gamma_1, \Gamma_2, ..., \Gamma_k$ are the connected components of a graph $\Gamma$, then 
$$ \Holant_\Gamma =  \prod_{j=1}^{k}\Holant_{\Gamma_j}.$$

So we only need to consider connected graphs as inputs.

Suppose $\Gamma$ has $n$ nodes and $m$ edges. First we cut all the edges in $\Gamma$. 
A node with degree $d$ can be viewed as an $\mathcal{F}$-gate with $d$ dangling edges. 
Now step by step we merge two dangling edges into one regular edge in the original graph, until we recover $\Gamma$ after $m$ steps. 
We prove that all the intermediate $\mathcal{F}$-gates still have generalized Fibonacci signatures with the same parameters $a,b,c,d,e,f,h,i,j,p$, and at every step we can compute the intermediate signature in polynomial time.
After $m$ steps we get $\Gamma$ as an $\mathcal{F}$-gate with no dangling edge; the only value of its signature is the Holant value we want.
To carry this out, we only need to prove that it is true for one single step.
There are two cases, depending on whether the two dangling edges to be merged are in the same component or not.

These two operations are illustrated in Figures~\ref{fig_1} and ~\ref{fig_2}.

In the first case, the two dangling edges belong to two components before their merging (Figure~\ref{fig_1}). 
Let $F$ have dangling edges $y_1, y_2, ..., y_r, z$ and $G$ have dangling edges $y_{r+1},..., y_{r+w}, z'$. 
After merging $z$ and $z'$, we have a new $\mathcal{F}$-gate $H$ with dangling edges $y_1, ...,  y_{r+w}$. 
Inductively the signatures of $F$ and $G$ are both generalized Fibonacci gates with the same parameters $a,b,c,d,e,f,h,i,j,p$. 
We show that this remains so for the resulting $\mathcal{F}$-gate H. 

We first prove that $H$ is symmetric.
We only need to show that the value of $H$ is unchanged if the values of two inputs are exchanged.
Because $F$ and $G$ are symmetric, if both inputs are from $\{y_1,..., y_r\}$ or from $\{y_{r+1}, ..., y_{r+w}\}$, the value of $H$ is clearly unchanged. 
Suppose one input is from $\{y_1,..., y_r\}$ and the other is from $\{y_{r+1}, ..., y_{r+w}\}$.
By the symmetry of $F$ and $G$ we may assume these two inputs are $y_1$ and $y_{r+1}$. 
Thus we will fix an arbitrary assignment for $y_2, ..., y_r, y_{r+2}, ..., y_{r+w}$, 
and we want to show $H(u, y_2, ..., y_r, v, y_{r+2}, ..., y_{r+w}) = H(v, y_2, ..., y_r, u, y_{r+2}, ..., y_{r+w})$ where $u,v\in \{R,G,B,W\}$.

We will suppress the fixed values $y_2, .., y_r, y_{r+2}, ..., y_{r+w}$ and denote 
$$ F_{uz} = F(u,y_2, ..., y_r,z)$$
$$G_{vz} = G(v, y_{r+2}, ..., y_{r+w}, z)$$
$$ H_{uv} = H(u, y_2, ..., y_r, v, y_{r+2}, ..., y_{r+w}).$$
Then by the definition of Holant, $H_{uv} = F_{uR}G_{vR} + F_{uG}G_{vG} + F_{uB}G_{vB} + F_{uW}G_{vW}$, $u,v\in \{R,G,B,W\}$.
To make the notation simpler, we use subscript $\{0,1,2,3,4,5,6,7,8,9\}$ to represent $\{RR, RG, RB, RW, GG, GB, GW, BB, BW,\\ WW\}$  now.
We also use $M_{ij}$ to denote $F_iG_j$ now.

Because $F$ is a generalized Fibonacci gate with parameters $a,b,c,d,e,f,h$, $i,j,p$, we have \begin{eqnarray*}
    F_4 &= F_0 + &aF_1 + bF_2 + cF_3 \\
    F_7 &= F_0 + &dF_1 + eF_2 + fF_3 \\
    F_9 &= F_0 + &hF_1 + iF_2 + jF_3 \\
    F_5 &= &      bF_1 + dF_2 + pF_3 \\
    F_6 &= &      cF_1 + pF_2 + hF_3 \\
    F_8 &= &      pF_1 + fF_2 + iF_3 
\end{eqnarray*}Similar for $G$. 
Then we have the following \begin{eqnarray*}
H_{RG} & = & F_{RR}G_{GR} + F_{RG}G_{GG} + F_{RB}G_{GB} + F_{RW}G_{GW} \\
   & = & F_0 G_1 + F_1 G_4 + F_2 G_5  + F_3 G_6 \\
   & =  & F_0 G_1 + F_1 (G_0 + aG_1 + b G_2 + c G_3) \\
        & & + F_2 (b G_1 + d G_2 + p G_3) + F_3 (cG_1 + pG_2 + h G_3)\\
   & = & M_{01} + M_{10} + aM_{11} + dM_{22} + hM_{33} \\ & & + b(M_{12} + M_{21}) + c(M_{13} + M_{31}) + p (M_{23} + M_{32}) 
\end{eqnarray*}

\begin{eqnarray*}
H_{GR} & =& F_{GR}G_{RR} + F_{GG} G_{RG} + F_{GB}G_{RB} + F_{GW}G_{RW}\\
    & = & F_1G_0 + F_4G_1 + F_5G_2 + F_6G_3 \\
    & =& F_1 G_0 + (F_0 + aF_1 + b F_2 + c F_3)G_1 \\
        & & + (b F_1 + d F_2 + p F_3)G_2 + (cF_1 + pF_2 + h F_3)G_3\\
    & = & M_{01} + M_{10} + aM_{11} + dM_{22} + hM_{33} \\ & & + b(M_{12} + M_{21}) + c(M_{13} + M_{31}) + p (M_{23} + M_{32}) 
\end{eqnarray*} 
So $H_{RG} = H_{GR}$. Similarly we can prove $H_{uv} = H_{vu}$ for other $u,v \in \{R,G,B,W\}$. 
Hence $H$ is symmetric. We thus can use $0,1,2,3,4,5,6,7,8,9$ as subscripts for $H$ too.

Now we show that $H(y_1, ..., y_{r+w})$ is also a generalized Fibonacci gate with parameters $a,b,c,d,e,f,h,i,j,p$.
Since we have proved that $H$ is symmetric, we can choose any two input variables to prove it being Fibonacci. 
Again, we choose $y_1$ and $y_{r+1}$.
(This assumes that $y_1$ and $y_{r+1}$ exists, i.e., $F$ and $G$ are not unary functions. 
If either one of them is unary, the proof is just as easy.)
For any fixed values of all other variables, we have
\begin{eqnarray*}
   && H_0 = H_{RR} \\ & = & F_{RR}G_{RR} + F_{RG}G_{RG} + F_{RB}G_{RB} + F_{RW}G_{RW} \\
        & = & F_0G_0 + F_1G_1 + F_2G_2 + F_3G_3 \\
        & = & M_{00} + M_{11} + M_{22} + M_{33}
\end{eqnarray*}
$H_0$ is the evaluation of $H_{uv}$ where $u$ and $v$ both take color $R$. For the rest, similarly, the notations are introduced above and the equations below also explain that clearly.
\begin{eqnarray*}
    && H_1 = H_{RG} \\ & = & F_{RR}G_{GR} + F_{RG}G_{GG} + F_{RB}G_{GB} + F_{RW}G_{GW} \\
   & = & F_0 G_1 + F_1 G_4 + F_2 G_5  + F_3 G_6 \\
   & =  & F_0 G_1 + F_1 (G_0 + aG_1 + b G_2 + c G_3) \\
        & & + F_2 (b G_1 + d G_2 + p G_3) + F_3 (cG_1 + pG_2 + h G_3)\\
   & = &   aM_{11} + dM_{22} + hM_{33} + (M_{01} + M_{10}) \\ & & + b(M_{12} + M_{21}) + c(M_{13} + M_{31}) + p (M_{23} + M_{32}) 
\end{eqnarray*}
\begin{eqnarray*}
    && H_2 = H_{RB}\\ &=& F_{RR}G_{BR} + F{RG}G_{BG} + F{RB}G_{BB} + F_{RW}G_{BW}\\
    &=& F_0 G_2 + F_1 G_5 + F_2 G_7 + F_3 G_8 \\
    &=& F_0 G_2 + F_1 (bG_1 + d G_2 + p G_3)  \\ 
    & & + F_2 (G_0 + d G_1 + e G_2 + fG_3) + F_3 (pG_1 + fG_2 + i G_3) \\
    &=& bM_{11} + eM_{22} + i M_{33} + (M_{02} + M_{20}) \\
     & &   + d(M_{12} + M_{21}) + p(M_{13} + M_{31}) + f(M_{23} + M_{32})
\end{eqnarray*}
\begin{eqnarray*}
    && H_3 = H_{RW}\\ & = & F_{RR}G_{WR} + F_{RG}G_{WG} + F_{RB}G_{WB} + F_{RW}G_{WW} \\
    &=& F_0 G_3 + F_1 G_6 + F_2 G_8 + F_3 G_9\\
    &=& F_0 G_3 + F_1 (c G_1 + p G_2 + h G_3) \\ & & + F_2 (p G_1 + f G_2 + i G_3) + F_3 (G_0 + h G_1 + i G_2 + j G_3)\\
    &=& cM_{11} + fM_{22} + jM_{33} + (M_{03} + M_{30}) \\
    && + p(M_{12} + M_{21}) + h(M_{13} + M_{31}) + i(M_{23} + M_{32})
\end{eqnarray*}
\begin{eqnarray*}
    &&H_4 = H_{GG}\\ &=& F_{GR}G_{GR} + F_{GG}G_{GG} + F_{GB}G_{GB} + F_{GW} G_{GW}\\
     &=& F_1 G_1 + F_4 G_4 + F_5 G_5 + F_6 G_6 \\
     &=& F_1 G_1 + (F_0 + a F_1 + b F_2 + c F_3)(G_0 + a G_1 + b G_2 + c G_3) \\ 
     && + (b F_1 + d F_2 + p F_3)(b G_1 + d G_2 + p G_3) \\ & &+ (c F_1 + p F_2 + h F_3)(c G_1 + p G_2 + h G_3) \\
     &=& M_{00} + (1+a^2 + b^2 + c^2) M_{11} \\ && + (b^2+d^2+p^2)M_{22} + (c^2 + p^2+h^2)M_{33} \\
      && + a(M_{01} + M_{10}) + b(M_{02} + M_{20}) + c(M_{03}+M_{30}) \\
      && + (ab + bd + cp)(M_{12} + M_{21}) + (ac + bp + ch)(M_{13} + M_{31}) \\ && + (bc + dp + ph)(M_{23} + M_{32})
\end{eqnarray*}
\begin{eqnarray*}
    &&H_5 = H_{GB}\\ &=& F_{GR} G_{BR}  + F_{GG}G_{BG} + F_{GB}G_{BB} + F_{GW}G_{BW} \\
     &=& F_1 G_2 + F_4 G_5 + F_5 G_7 + F_6 G_8 \\
     &=& F_1 G_2 + (F_0 + aF_1 + bF_2 + cF_3)(bG_1 + dG_2 + pG_3) \\
     && + (bF_1 + dF_2 + pF_3) (G_0 + dG_1 + eG_2 + fG_3) \\ && + (cF_1 + pF_2 + hF_3)(pG_1 + fG_2 + iG_3) \\
     &=& (ab + bd + cp)M_{11} + (bd+de+pf)M_{22} + (cp+pf+ih)M_{33} \\
     && + b(M_{01}+M_{10}) + d(M_{02} + M_{20}) + p(M_{03}+M_{30}) \\
     && + (1+ad+be+cf)M_{12} + (b^2 + d^2 + p^2)M_{21} \\
     && + (ap+bf+ci)M_{13} + (bc + pd + hp)M_{31} \\
     && + (bp+df+pi)M_{23} + (cd+pe + hf)M_{32} \\
     &=& (ab + bd + cp)M_{11} + (bd+de+pf)M_{22} + (cp+pf+ih)M_{33} \\
     && + b(M_{01}+M_{10}) + d(M_{02} + M_{20}) + p(M_{03}+M_{30}) \\
     && +(b^2 + d^2 + p^2)(M_{12} + M_{21}) + (bc+pd+hp)(M_{13}+M_{31}) \\ && + (bp+df+pi)(M_{23}+M_{32})
\end{eqnarray*}
Here in the expression of $H_5$, the last equivalence is by Equations~\ref{eq_d4q} and ~\ref{eq_d4r}.

\begin{eqnarray*}
    &&H_6 = H_{GW}\\ &=& F_{GR}G_{WR} + F{GG}G_{WG} + F_{GB}G_{WB} + F_{GW}G_{WW} \\
    &=& F_1G_3 + F_4G_6 + F_5G_8 + F_6G_9 \\
    &=& F_1G_3 + (F_0 + aF_1 + bF_2 + cF_3)(cG_1 + pG_2 +hG_3) \\
     && + (bF_1 + dF_2 + pF_3)(pG_1 + fG_2 + iG_3)\\ && + (cF_1 + pF_2 + hF_3)(G_0 + hG_1 + iG_2 + jG_3) \\
     &=& (ac+bp+ch)M_{11} + (bp+df+pi)M_{22} + (ch+pi+hj)M_{33} \\
     && + c(M_{01} + M_{10}) + p(M_{02} + M_{20}) + h(M_{03}+M_{30}) \\
     && + (ap + bf + ci) M_{12} + (bc + dp + ph) M_{21} \\
     && + (1 + ah + bi + cj) M_{13} + (c^2 + p^2 + h^2) M_{31} \\
     && + (bh + di + pj)M_{23} + (cp+pf+hi)M_{32} \\
     &=& (ac+bp+ch)M_{11} + (bp+df+pi)M_{22} + (ch+pi+hj)M_{33} \\
     && + c(M_{01} + M_{10}) + p(M_{02} + M_{20}) + h(M_{03}+M_{30}) \\
     && + (bc+dp +ph)(M_{12}+M_{21})  + (c^2 + p^2+h^2)(M_{13}+M_{31}) \\ && + (cp+pf+hi)(M_{23}+M_{32})
     % \vspace{-.2in}
\end{eqnarray*}
Similarly, in the expression of $H_6$, the last equivalence is by Equations~\ref{eq_d4q} and ~\ref{eq_d4r}.
\vspace{-.1in}
\begin{eqnarray*}
    && H_7 = H_{BB}\\ &=&  F_{BR}G_{BR} + F_{BG}G_{BG}+F_{BB}G_{BB} + F_{BW}G_{BW} \\
    &=& F_2G_2 + F_5G_5 + F_7G_7 + F_8G_8 \\
    &=& F_2G_2 + (bF_1 + dF_2 + pF_3)(bG_1 + dG_2 + pG_3) \\ 
    && (F_0 + dF_1 + eF_2 + fF_3)(G_0 + dG_1 + eG_2 + fG_3) \\&& + (pF_1+fF_2 + iF_3)(pG_1+fG_2+iG_3) \\
    &=& M_{00} + (b^2+d^2+p^2)M_{11} + (1+d^2+e^2+f^2)M_{22} \\
    && + d(M_{01}+M_{10}) + e(M_{02}+M_{20}) + f(M_{03}+M_{30}) \\
    && + (bd + de + fp)(M_{12}+M_{21})  + (bp+df+pi)(M_{13}+M_{31}) \\ && + (dp+ef+fi)(M_{23} + M_{32})  + (p^2+f^2+i^2)M_{33}
\end{eqnarray*}
\vspace{-.2in}
\begin{eqnarray*}
    && H_8 = H_{BW}\\ &=& F_{BR}G_{WR} + F_{BG}G_{WG} + F_{BB}G_{WB} + F_{BW}G_{WW} \\
    &=& F_2G_3 + F_5G_6 + F_7G_8 + F_8G_9 \\
    &=& F_2G_3 + (bF_1 + dF_2 + pF_3)(cG_1 + pG_2 + hG_3) \\ && + (F_0+dF_1 + eF_2 + fF_3)(pG_1+fG_2+iG_3)\\ 
            && + (pG_1 +fG_2+iG_3)(F_0+hF_1 + iF_2 + jF_3) \\
    &=& (bc+dp+ph)M_{11} + (dp+ef+fi)M_{22} + (ph+fi+ij)M_{33} \\
     && + p(M_{01} +M_{10}) + f(M_{02} + M_{20}) + i(M_{03}+M_{30}) \\
       && + (bp+df+pi)M_{12} + (cd + pe + hf)M_{21} \\
       && + (bh+di + pj)M_{13} + (cp + pf + hi)M_{31} \\
       && + (1+ dh+ei+fj)M_{23} + (p^2 + f^2 + i^2)M_{32} \\
    &=& (bc+dp+ph)M_{11} + (dp+ef+fi)M_{22} + (ph+fi+ij)M_{33} \\
     && + p(M_{01} +M_{10}) + f(M_{02} + M_{20}) + i(M_{03}+M_{30}) \\
       && +(bp+df+pi)(M_{12} + M_{21})  + (cp+pf+hi)(M_{13}+M_{31}) \\
       && + (p^2+f^2+i^2)(M_{32}+M_{23})
\end{eqnarray*} 
\vspace{-.2in}
Similarly, in the expression of $H_8$, the last equivalence is by Equations~\ref{eq_d4q} and ~\ref{eq_d4r}.
\vspace{-.03in}
\begin{eqnarray*}
    && H_9 = H_{WW}\\ &=& F_{WR}G_{WR} + F_{WG}G_{WG} + F_{WB}G_{WB} + F_{WW}G_{WW} \\
    &=& F_3G_3+F_6G_6+F_8G_8+F_9G_9 \\
    &=& F_3G_3 + (cF_1+pF_2 +hF_3)(cG_1+pG_2+hG_3) \\ && + (pF_1+fF_2 +iF_3)(pG_1+fG_2+iG_3) \\ 
          && + (F_0+hF_1+iF_2+jF_3)(G_0+hG_1+iG_2+jG_3)\\
    &=& M_{00} + (c^2+p^2+h^2)M_{11} + (p^2+f^2+i^2)M_{22} \\ && + (1 + h^2+i^2+j^2)M_{33} \\
    && + h(M_{01}+M_{10}) + i(M_{02}+M_{20}) + j(M_{03}+M_{30}) \\
    && + (cp+pf+hi)(M_{12}+M_{21}) + (ch+pi+hj)(M_{13}+M_{31}) \\&& + (ph+fi+ij)(M_{23}+M_{32}) 
\end{eqnarray*}

Consider $H_0+aH_1+bH_2+cH_3$, in fact, \vspace{-.05in} \begin{eqnarray*}
   & & H_0 + aH_1 + bH_2 + cH_3 \\
= & &  M_{00} + (1 + a^2 + b^2+c^2)M_{11} \\ &+&  (1+ad+be+cf)M_{22} + (1+ah+bi+cj)M_{33} \\
  &+ & a(M_{10}+M_{01})+b(M_{20}+M_{02}) + c(M_{03}+M_{30}) \\
  &+& (ab+bd+cp)(M_{12}+M_{21}) \\ &+&  (ac+bp+ch)(M_{13}+M_{31}) \\ &+&  (ap+bf+ci)(M_{23}+M_{32}) \\
= && M_{00} + (1 + a^2 + b^2+c^2)M_{11} \\ &+&  (b^2+d^2+p^2)M_{22} + (c^2+p^2+h^2)M_{33} \\
  &+ & a(M_{10}+M_{01})+b(M_{20}+M_{02}) + c(M_{03}+M_{30}) \\
  &+& (ab+bd+cp)(M_{12}+M_{21}) \\ &+&  (ac+bp+ch)(M_{13}+M_{31}) \\ &+&  (bc+dp+ph)(M_{23}+M_{32}) 
\end{eqnarray*}
where the last equivalence is by Equations~\ref{eq_d4q} and ~\ref{eq_d4r}, and the final value is just $H_4$

Similarly, we easily prove the linear relations between $H_5, H_6, \ldots, H_9$ and $H_0$, $H_1$, $H_2$, $H_3$ and conclude that $H$ also satisfies the relation~\ref{eq_d4p} with the same suite of parameters.

Next we consider the second case, where the two dangling edges to be merged are in the same component (Figure~\ref{fig_2}). 
Obviously, the signature for the new gate $H$ is symmetric. 
If $F$ in Figure~\ref{py2}  is the signature before the merging operation, then the signature after the merging operation is $H$ as in Figure~\ref{py3}. Note that both figures can be extended downward following the relations.

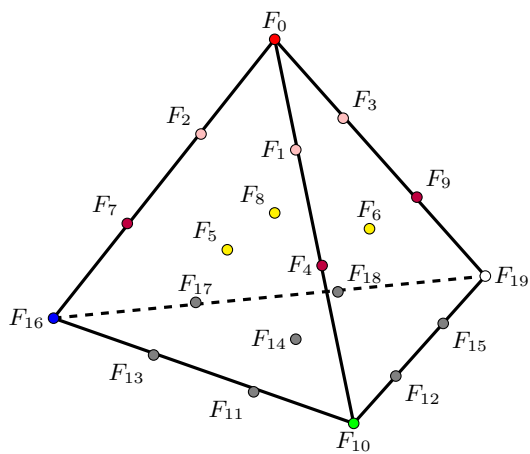
\begin{figure}[h] 
\centering
\begin{tikzpicture} [scale=0.7] 
\draw[very thick] (6.0, 0.5)--(0.3,2.5)--(4.5,7.8)--(8.5,3.3)--cycle;
\draw[very thick] (6.0, 0.5)--(4.5,7.8);
\draw[very thick, dashed] (0.3,2.5) -- (8.5,3.3);
\filldraw[fill=red] (4.5,7.8) circle(.1) node[anchor = south]{$F_0$} ;
\filldraw[fill=pink] (3.1,6.0) circle(.1) node[anchor = south east]{$F_2$} ;
\filldraw[fill=purple] (1.7,4.3) circle(.1) node[anchor = south east]{$F_7$} ;
\filldraw[fill=blue] (0.3,2.5) circle(.1) node[anchor = east]{$F_{16}$} ;
\filldraw[fill=pink] (4.9,5.7) circle(.1) node[anchor = east]{$F_1$} ;
\filldraw[fill=purple] (5.4,3.5) circle(.1) node[anchor=east]{$F_4$};
\filldraw[fill=green] (6.0,0.5) circle(.1) node[anchor=north] {$F_{10}$};
\filldraw[fill=pink] (5.8,6.3) circle(.1) node[anchor = south west] {$F_3$};
\filldraw[fill= purple] (7.2,4.8) circle(.1) node[anchor=south west] {$F_9$};
\filldraw[fill=white] (8.5,3.3) circle(.1) node[anchor = west] {$F_{19}$};
\filldraw[fill=gray] (3.0,2.8) circle(.1) node[anchor=south] {$F_{17}$};
\filldraw[fill=gray] (5.7, 3.0) circle(.1) node[anchor=south west] {$F_{18}$};
\filldraw[fill=gray] (2.2,1.8) circle(.1) node[anchor=north east] {$F_{13}$};
\filldraw[fill=gray] (4.1,1.1) circle(.1) node[anchor=north east] {$F_{11}$};
\filldraw[fill=gray] (6.8,1.4) circle(.1) node[anchor=north west] {$F_{12}$};
\filldraw[fill=gray] (7.7,2.4) circle(.1) node[anchor=north west] {$F_{15}$};

\filldraw[fill=yellow] (3.6,3.8) circle(.1) node[anchor=south east] {$F_5$};
\filldraw[fill=yellow] (6.3,4.2) circle(.1) node[anchor=south] {$F_6$};
\filldraw[fill=yellow] (4.5,4.5) circle(.1) node[anchor=south east] {$F_8$};
\filldraw[fill=gray] (4.9,2.1) circle(.1) node[anchor=east] {$F_{14}$};

% \draw[very thick] (1.3,7.3)--(3.3,3.8);
% \draw[very thick] (5.3,7.3)--(3.3, 3.8);
% \draw[very thick] (3.3, 3.8)--(3.3, 2.9);
% \draw[very thick] (3.3,6.1)--(3.3,7.3);
% \draw[very thick] (3.3,6.1)--(4.3, 5.55);
% \draw[very thick] (3.3,6.1)--(2.3,5.55);
% \filldraw[fill=blue] (1.2,7.2) rectangle(1.4,7.4);  % 1.3  7.3
% \filldraw[fill=blue] (5.2,7.2) rectangle(5.4,7.4);  % 5.3 7.3
% \filldraw[fill=blue] (3.2,3.7) rectangle(3.4,3.9); % 3.3  3.8
% \filldraw[fill=green] (3.3,7.3)circle(0.1);
% \filldraw[fill=green] (4.3,5.55)circle(0.1);
% \filldraw[fill=green] (2.3,5.55)circle(0.1);
% \filldraw[fill=blue] (3.2, 6.0) rectangle(3.4, 6.2);     % 3.3, 6.1
\end{tikzpicture}
% \vspace{-.3in}
% \begin{center}
% ............
% \end{center}

  \caption{Signature $F$. It can be extended downward (having more layers). }
  \label{py2}

\end{figure}

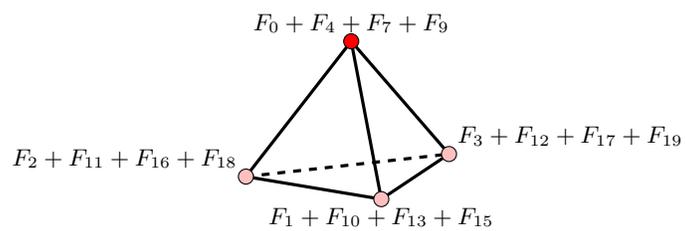
\begin{figure}[h] 
\centering
\begin{tikzpicture} [scale=1.0] 
\draw[very thick] (4.9,5.7)--(3.1,6.0)--(4.5,7.8)--(5.8,6.3)--cycle;
\draw[very thick] (4.9,5.7)--(4.5,7.8);
\draw[very thick, dashed] (3.1,6.0) -- (5.8,6.3);
\filldraw[fill=red] (4.5,7.8) circle(.1) node[anchor = south]{$F_0 + F_4 + F_7+F_9$} ;
\filldraw[fill=pink] (3.1,6.0) circle(.1) node[anchor = south east]{$F_2 + F_{11} + F_{16} + F_{18}$} ;
% \filldraw[fill=purple] (1.7,4.3) circle(.1) node[anchor = south east]{$F_7$} ;
% \filldraw[fill=blue] (0.3,2.5) circle(.1) node[anchor = east]{$F_{16}$} ;
\filldraw[fill=pink] (4.9,5.7) circle(.1) node[anchor = north]{$F_1 + F_{10} + F_{13} + F_{15}$} ;
% \filldraw[fill=purple] (5.4,3.5) circle(.1) node[anchor=east]{$F_4$};
% \filldraw[fill=green] (6.0,0.5) circle(.1) node[anchor=north] {$F_{10}$};
\filldraw[fill=pink] (5.8,6.3) circle(.1) node[anchor = south west] {$F_3 + F_{12} + F_{17} + F_{19}$};

\end{tikzpicture}
% \vspace{-.1in}
% \begin{center}
% ............
% \end{center}

  \caption{Signature $H$. It can be extended downward (having more layers). }
  \label{py3}

\end{figure}

Such an operation preserves the linear recurrence. 
It follows that $H$ is also a generalized Fibonacci gate with parameters $a,b,c,d,e,f,h,i,j,p$, and the signature has already been computed.

\end{proof}

\end{document}